\theoremstyle{definition}
\newtheorem{theorem}{Theorem}[section]
\newtheorem{proposition}[theorem]{Proposition}
\newtheorem{remark}[theorem]{Remark}
\def\@seccntformat#1{\@ifundefined{#1@cntformat}%
	{\csname the#1\endcsname\quad}
	{\csname #1@cntformat\endcsname}
}
\newif\ifShowComments
\def\strutdepth{\dp\strutbox}
\def\druk#1{\strut\vadjust{\kern-\strutdepth
        {\vtop to \strutdepth{%
                \baselineskip\strutdepth\vss
                        \llap{\hbox{#1}\quad}\null}}}}
\title{\bf
Moment-type estimators for a weighted exponential family 
}
\author{
\text{Roberto Vila}$^{1}$\thanks{Corresponding author: Roberto Vila, email: {rovig161@gmail.com}
\newline
}\,\,\,and
\text{Helton Saulo}$^{1,2}$\\
{\small $^{1}$ Department of Statistics, University of Brasilia, Brasilia, Brazil}\\
{\small $^{2}$ Department of Economics, Federal University of Pelotas, Pelotas, Brazil}\\
}
\begin{document}
	\maketitle 	
	\begin{abstract}
In this paper, we propose and study closed-form moment type estimators for a weighted exponential family. We also develop a bias-reduced version of these proposed closed-form estimators using bootstrap techniques. The estimators are evaluated using Monte Carlo simulation. This shows favourable results for the proposed bootstrap bias-reduced estimators.
	\end{abstract}
	\smallskip
	\noindent
	{\small {\bfseries Keywords.} {Weighted exponential family $\cdot$ Moment method $\cdot$ Monte Carlo simulation $\cdot$ \verb+R+ software.}}
	\\
	{\small{\bfseries Mathematics Subject Classification (2010).} {MSC 60E05 $\cdot$ MSC 62Exx $\cdot$ MSC 62Fxx.}}
	
	
	\section{Introduction}
	\noindent


%
In this work we provide closed-form estimators for the parameters of probability distributions that belongs to the following weighted exponential family \citep{Vila2024b}:
%
%
%
\begin{align}\label{pdf-1}
f(x;\psi)
=
{(\mu\sigma)^{\mu+1} \over (\sigma+\delta_{ab})\Gamma(\mu+1)}\,
[1+\delta_{ab}T(x)]\,
{\vert T'(x)\vert\over T(x)}\,
\exp\left\{-\mu \sigma T(x)+\mu\log(T(x))\right\},
\quad 
x\in (0,\infty),
\end{align}
where $\psi=(\mu,\sigma)$, $\mu,\sigma>0$, $T:(0,\infty)\to (0,\infty)$ is a real strictly monotone twice differentiable function, $\delta_{ab}$ is the Kronecker delta function and $T'(x)$ denotes the derivative of $T(x)$ with respect to $x$.

The probability function $f(x;\psi)$ in \eqref{pdf-1} can be interpreted as a mixture of two distributions that belongs to the exponential family, that is,
\begin{align}\label{decomp-1}
	f(x;\psi)&=
	{\sigma\over \sigma+\delta_{ab}}\,
	f_1(x)
	+
	{\delta_{ab}\over \sigma+\delta_{ab}}\,
	f_2(x),
	\quad x\in (0,\infty), \ \mu, \sigma>0,
\end{align}
where
\begin{align}\label{def-fi}
	f_j(x)
	=
	{(\mu\sigma)^{\mu+j-1} \over \Gamma(\mu+j-1)}\,
	{\vert T'(x)\vert\over T(x)}\,
	\exp\left\{-\mu \sigma T(x)+(\mu+j-1)\log(T(x))\right\},
	\quad 
	x\in (0,\infty), \quad j=1,2.
\end{align}
Densities of form $f_j(x)$, $j=1,2$, have appeared in \cite{Nascimento2014} and \cite{Vila2024,Vila2024b}.

If $X$ has density in \eqref{pdf-1}, from \eqref{decomp-1} and \eqref{def-fi} it is simple to show that the random variable $X$ defined as
\begin{align}\label{rep-st-1}
	X\equiv(1-B)T^{-1}(Z_1)+BT^{-1}(Z_2),
\end{align}
has density in \eqref{pdf-1}, where  $T^{-1}$ denotes the inverse function of $T$, $B\in\{0,1\}$ is a Bernoulli random variable with success parameter $\delta_{ab}/(\sigma+\delta_{ab})$, independent of $Z_1$ and $Z_2$, such that $Z_j\sim{\rm Gamma}(\mu+j-1,1/(\mu\sigma))$, $j=1,2$, that is, the density function of $Z_j$ is given by
\begin{align*}
	f_{Z_j}(z)
	=
	{1\over\Gamma(\mu+j-1)[1/(\mu\sigma)]^{\mu+j-1}}\, z^{(\mu+j-1)-1}\exp\left\{-{z\over[1/(\mu\sigma)]}\right\}, \quad z>0, j=1,2.
\end{align*}

Table \ref{table:1-0} \citep{Vila2024b} presents some examples of generators $T(x)$ for use in \eqref{pdf-1} with $a=b$.

\begin{table}[H]
	\caption{Some examples of 
		generators $T(x)$ of probability densities \eqref{pdf-1} with $a= b$.}
	\vspace*{0.15cm}
	\centering 
			\begin{tabular}{lcccccccl} 
				\hline
				Distribution & $\mu$ & $\sigma$ & $T(x)$&
				Parameters
				\\ [0.5ex] 
				\noalign{\hrule height 1.0pt}
				\\ [-1.5ex] 
				Weighted Lindley \citep{Kim2021}
				&  $\phi$  & ${\lambda\over \phi}$   & $x$&
				 $\lambda,\phi>0$
				\\ [2.0ex] 
Weighted inverse  Lindley  &  $\phi$  & ${\lambda\over \phi}$   & ${1\over x}$&
$\lambda,\phi>0$
				\\ [2.0ex] 
New weighted exponentiated Lindley  &  $\phi$  & ${\lambda\over \phi}$   & ${\log(x+1)}$&
$\lambda,\phi>0$
				\\ [2.0ex] 
New weighted log-Lindley  &  $\phi$  & ${\lambda\over \phi}$   & $\exp(x)-1$&
$\lambda,\phi>0$
		\\ [2.0ex]  
		Weighted Nakagami 
		&  $m$  & ${1\over \Omega}$   & $x^2$&
		$m\geqslant {1\over 2}, \ \Omega>0$		
		\\ [2.0ex] 
		Weighted inverse  Nakagami  &   $m$  & ${1\over \Omega}$    & ${1\over x^2}$&
		$m\geqslant {1\over 2}, \ \Omega>0$
		\\ [2.0ex] 
		New weighted exponentiated Nakagami  &   $m$  & ${1\over \Omega}$   & ${\log(x^2+1)}$&
		$m\geqslant {1\over 2}, \ \Omega>0$
		\\ [2.0ex] 
		New weighted log-Nakagami  &   $m$  & ${1\over \Omega}$  & $\exp(x^2)-1$&
		$m\geqslant {1\over 2}, \ \Omega>0$
		\\ [1.0ex] 
		\hline	
\end{tabular}
\label{table:1-0} 
\end{table}

For  $a\neq b$,
Table \ref{table:1} \citep{Vila2024} provides some examples of generators $T(x)$ for use in \eqref{pdf-1}.
\begin{table}[H]
	\caption{Some examples of 
		generators $T(x)$ of exponential family \eqref{pdf-1} with $a\neq b$.}
	\vspace*{0.15cm}
	\centering 
		\renewcommand{\arraystretch}{-0.8}
		\resizebox{\linewidth}{!}{
			\begin{tabular}{lcccrll} 
				\hline
				Distribution & $\mu$ & $\sigma$ & $T(x)$ & Parameters
				\\ [0.5ex] 
				\noalign{\hrule height 1.0pt}
				\\ 
				Nakagami \citep{Laurenson1994}
				&  $m$  & ${1\over \Omega}$   & $x^2$ &  $m\geqslant {1\over 2}$, \
				$\Omega>0$ 
				\\ [2.0ex]	
				Maxwell-Boltzmann \citep{Dunbar1982}
				&  ${3\over 2}$  & ${1\over 3\beta^2}$   & $x^2$ &  $\beta>0$ 
				\\ [2.0ex]	
				Rayleigh \citep{Rayleigh1880}
				&  $1$  & ${1\over 2\beta^2}$   & $x^2$ &  $\beta>0$ 
				\\ [2.0ex]		
				Gamma \citep{Stacy1962}
				&  $\alpha$  &  ${1\over\alpha\beta}$ & $x$ &  $\alpha, \beta > 0$ 
				\\ [2.0ex]		
				Inverse gamma \citep{Cook2008}
				&  $\alpha$  &  ${1\over\alpha\beta}$ & ${1\over x}$ &  $\alpha, \beta > 0$ 
				\\ [2.0ex]
				$\delta$-gamma \citep{Rahman2014}
				&  ${\beta\over \delta}$  &  ${1\over \beta}$ & $x^\delta$ & 
				$\delta, \beta > 0$ 
				\\ [2.0ex]
				Weibull \citep{Johnson1994}		
				&  $1$ & ${1\over \beta^\delta}$ & $x^\delta$ &  
				$\delta, \beta>0$ 
				\\ [2.0ex]		
				Inverse Weibull (Fréchet) \citep{khan2008}
				& $1$ & ${1\over \beta^\delta}$ & ${1\over x^\delta}$ &  
				$\delta, \beta>0$ 
				\\ [2.0ex] 
				Generalized gamma \citep{Stacy1962}
				& ${\alpha\over \delta}$ & ${\delta\over \alpha \beta^\delta}$ & $x^\delta$ &  $\alpha,\delta, \beta>0$ 
				\\ [2.0ex] 
				Generalized inverse gamma \citep{Lee1991}
				&  ${\alpha\over \delta}$ &  ${\delta\over \alpha\beta^{\delta}}$ & ${1\over x^\delta}$ &  
				$\alpha,\delta, \beta>0$ 
				\\ [2.0ex] 
				New log-generalized gamma
				& ${\alpha\over \delta}$ & ${\delta\over \alpha \beta^\delta}$ & $[\exp(x)-1]^\delta $ & $\alpha,\delta>0$,  $\beta>0$
				\\ [2.0ex] 
				New log-generalized inverse gamma
				& ${\alpha\over \delta}$ & ${\delta\over \alpha \beta^\delta}$ & $\big[\exp\big({1\over x}\big)-1\big]^\delta $ & $\alpha,\delta>0$,  $\beta>0$			
				\\ [2.0ex] 
				New exponentiated generalized gamma
				& ${\alpha\over \delta}$ & ${\delta\over \alpha \beta^\delta}$ & $\log^\delta(x+1) $ & $\alpha,\delta>0$,  $\beta>0$
				\\ [2.0ex] 
				New exponentiated generalized inverse gamma
				& ${\alpha\over \delta}$ & ${\delta\over \alpha \beta^\delta}$ & $\log^\delta\big({1\over x}+1\big) $ & $\alpha,\delta>0$,  $\beta>0$
				\\ [2.0ex] 
				New modified log-generalized gamma
				& ${\alpha\over \delta}$ & ${\delta\over \alpha \beta^\delta}$ & $\exp^\delta\big(x-{1\over x}\big) $ & $\alpha,\delta>0$,  $\beta>0$
				\\ [2.0ex] 	
				New extended log-generalized gamma
				& ${\alpha\over \delta}$ & ${\delta\over \alpha \beta^\delta}$ & $x^\delta[\exp(x)-1]^\delta$ & $\alpha,\delta>0$,  $\beta>0$
				\\ [2.0ex] 
				Chi-squared \citep{Johnson1994}	
				& ${\nu\over 2}$ & ${1\over \nu}$ & $x$ & $\nu>0$ 
				\\ [2.0ex] 	
				Scaled inverse chi-squared  \citep{Bernardo1993}
				& ${\nu\over 2}$ & $\tau^2$ &  ${1\over x}$ &  $\nu, \tau^2>0$ 
				\\ [2.0ex] 	
				Gompertz \citep{Gompertz1825}
				& $1$ & $\alpha$ &  $\exp(\delta x)-1$ &  $\alpha, \delta>0$ 
				\\ [2.0ex] 		
				Modified Weibull extension \citep{Xie2022}
				& $\lambda\alpha$ & $1$ &  $\exp\big[ \left({x\over\alpha}\right)^\beta\big]-1$ &  $\alpha,\lambda,\beta>0$
				\\ [2.0ex] 		
				Traditional Weibull \citep{Nadarajah2005}
				& $1$ & $a$ &  $x^b[\exp( cx^d)-1]$ &  $a, d>0$, \ $b,c\geqslant 0$
				\\ [2.0ex] 		
				Flexible Weibull \citep{Bebbington2007}
				& $1$ & $a$ &  $\exp\left(b x-{c\over x}\right)$ &  $a, b, c>0$
				\\ [2.0ex] 		
				Burr type XII (Singh-Maddala) \citep{Burr1942}
				& $1$ & $k$ &  $\log(x^c+1)$ &  $c, k>0$
				\\ [2.0ex] 		
				Dagum (Mielke Beta-Kappa) \citep{Dagum1975}
				& $1$ & $k$ &  $\log\big({1\over x^c}+1\big)$ &  $c, k>0$
				\\ [1.5ex] 	
				\hline	
			\end{tabular}
		}
	\label{table:1} 
\end{table}


There are a number of different methodological proposals in the literature for obtaining close-form estimators; see, for example, the moment-based type \citep{Cheng-Beaulieu2002} and the score-adjusted approaches \citep{Nawa2023, Tamae2020}. Closed-form estimators based on the likelihood function have also been suggested. A type of likelihood-based estimator is obtained by considering the likelihood equations of the generalized distribution obtained by a power transformation and taking the baseline distribution as a special case; see for example, \cite{YCh2016}, \cite{Vila2024}, \cite{RLR2016}, \cite{Kim2021} and \cite{Zhao2021}. For other proposals for likelihood-based estimators, the reader is referred to \cite{Kim2022} and \cite{Cheng-Beaulieu2001}.

This paper develops moment-based closed-form estimators for the parameters of probability distributions of the weighted exponential family \eqref{pdf-1} in the special case where $T(x)=x^{-s}$, $x>0$ and $s\neq 0$. The main motivation for choosing this type of generator is that it provides closed-form expressions for moments (of functions) of $X$ in \eqref{rep-st-1}, which allows finding moment-based estimators for the corresponding parameters. Note that this type of generator includes the $T(x)$ generators of Nakagami, Maxwell-Boltzmann, Rayleigh, gamma, inverse gamma, $\delta$-gamma, Weibull, inverse Weibull, generalized gamma, generalized inverse gamma, chi-squared, scaled inverse chi-squared, weighted Lindley, weighted inverse Lindley, weighted Nakagami and weighted inverse Nakagami; see Tables \ref{table:1-0} and \ref{table:1}.

The rest of the paper is structured as follows. In section \ref{prem_results} we briefly present some preliminary results. In Sections \ref{The New Estimators} and \ref{Asymptotic behavior of estimators}, we describe the newly proposed moment-based estimators and some asymptotic results respectively. In Section \ref{Sec:simulation}, we perform a Monte Carlo simulation study to assess the
performance of a bootstrap bias-reduced version of the proposed estimators.


%
%

\section{Preliminary results}\label{prem_results}
In this section, closed-form expressions for moments (of functions) of $X$ in \eqref{rep-st-1} are provided.

\subsection{Moments}

\begin{proposition}\label{expectation}
If $X$ has density in \eqref{pdf-1}, then, for any measurable function $g:(0,\infty)\to\mathbb{R}$, we have
\begin{align*}
	\mathbb{E}[g(X)]
	=
		{\sigma\over \sigma+\delta_{ab}}\, 
	\mathbb{E}
	[
	g(T^{-1}(Z_1))
	]
	+
	{\delta_{ab}\over \sigma+\delta_{ab}}\, 
	\mathbb{E}
	[
	g(T^{-1}(Z_2))
	].
	\end{align*}
\end{proposition}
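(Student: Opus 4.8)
The plan is to reduce the statement to the mixture decomposition \eqref{decomp-1} together with a single change of variables, so that nothing specific to $g$ ever has to be computed. First I would write, straight from the definition of expectation and \eqref{decomp-1},
\begin{align*}
\mathbb{E}[g(X)]
=
\int_0^\infty g(x)\, f(x;\psi)\,dx
=
{\sigma\over \sigma+\delta_{ab}}\int_0^\infty g(x)\, f_1(x)\,dx
+
{\delta_{ab}\over \sigma+\delta_{ab}}\int_0^\infty g(x)\, f_2(x)\,dx,
\end{align*}
the splitting of the integral being legitimate because \eqref{decomp-1} is a finite convex combination (and, if one wants to be careful about integrability, the identity is first checked for $g\geqslant 0$ by Tonelli and then extended). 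It then suffices to prove that, for $j=1,2$,
\begin{align*}
\int_0^\infty g(x)\, f_j(x)\,dx = \mathbb{E}\big[g(T^{-1}(Z_j))\big],
\qquad
Z_j\sim{\rm Gamma}\big(\mu+j-1,\,1/(\mu\sigma)\big).
\end{align*}

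Next I would carry out the substitution $z=T(x)$, i.e.\ $x=T^{-1}(z)$, in the integral on the left. Since $T$ is strictly monotone and twice differentiable on $(0,\infty)$, it is a $C^1$-diffeomorphism onto its image and $dz=T'(x)\,dx$; once one accounts for the reversal of the integration limits when $T$ is decreasing, the factor $\vert T'(x)\vert$ appearing in \eqref{def-fi} is exactly what is needed to turn $f_j(x)\,dx$ into
\begin{align*}
{(\mu\sigma)^{\mu+j-1}\over \Gamma(\mu+j-1)}\,{1\over z}\,z^{\,\mu+j-1}\exp\{-\mu\sigma z\}\,dz
=
{(\mu\sigma)^{\mu+j-1}\over \Gamma(\mu+j-1)}\,z^{\,(\mu+j-1)-1}\exp\{-\mu\sigma z\}\,dz,
\end{align*}
which is precisely the Gamma density $f_{Z_j}(z)$ recorded in the excerpt. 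Hence the left-hand integral equals $\int_0^\infty g(T^{-1}(z))\,f_{Z_j}(z)\,dz=\mathbb{E}[g(T^{-1}(Z_j))]$, and recombining the two terms with the weights $\sigma/(\sigma+\delta_{ab})$ and $\delta_{ab}/(\sigma+\delta_{ab})$ yields the assertion.

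The only step that genuinely needs attention is this change of variables: I would verify that $T$ maps $(0,\infty)$ \emph{onto} $(0,\infty)$ (so that, after substitution, one really integrates a Gamma density over its whole support, which is also what makes $T^{-1}(Z_j)$ well defined), and keep track of the sign and of the order of the integration limits when $T$ is decreasing — the absolute value in $\vert T'(x)\vert$ is exactly what absorbs that sign. Everything else is routine bookkeeping. An equivalent and even shorter route, if one takes \eqref{rep-st-1} as given, is to condition on the Bernoulli variable $B$ in $X\equiv(1-B)T^{-1}(Z_1)+BT^{-1}(Z_2)$ and use its independence of $(Z_1,Z_2)$: this gives $\mathbb{E}[g(X)]=\mathbb{P}(B=0)\,\mathbb{E}[g(T^{-1}(Z_1))]+\mathbb{P}(B=1)\,\mathbb{E}[g(T^{-1}(Z_2))]$, and substituting $\mathbb{P}(B=0)=\sigma/(\sigma+\delta_{ab})$ and $\mathbb{P}(B=1)=\delta_{ab}/(\sigma+\delta_{ab})$ completes the argument.
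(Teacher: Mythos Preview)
Your proposal is correct. The paper's own proof takes precisely the ``even shorter route'' you sketch at the end: it invokes the stochastic representation \eqref{rep-st-1}, uses that $B\in\{0,1\}$ to write $g(X)=(1-B)\,g(T^{-1}(Z_1))+B\,g(T^{-1}(Z_2))$, and then factors the expectation by independence of $B$ from $(Z_1,Z_2)$. Your primary route via the density decomposition \eqref{decomp-1} and the change of variables $z=T(x)$ is equally valid and is essentially a re-derivation of what underlies \eqref{rep-st-1}; it is a bit more work but has the advantage of making explicit the surjectivity and monotonicity hypotheses on $T$ (and the role of $\vert T'(x)\vert$), which the paper's argument takes for granted through \eqref{rep-st-1}.
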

\begin{proof}
	Taking into account the characterization \eqref{rep-st-1} of $X$  and by using the independence of $B\sim{\rm Bernoulli}(\delta_{ab}/(\sigma+\delta_{ab}))$ with $Z_1$ and $Z_2$, we obtain
\begin{align*}
	\mathbb{E}[g(X)]
	&=
	\mathbb{E}
	[
	g(
	(1-B)T^{-1}(Z_1)+BT^{-1}(Z_2)
	)
	]
	\nonumber
	\\[0,2cm]
	&=
	\mathbb{E}(1-B)
	\mathbb{E}
	[
	g(T^{-1}(Z_1))
	]
	+
	\mathbb{E}(B)
	\mathbb{E}
	[
	g(T^{-1}(Z_2))
	]
	\nonumber
	\\[0,2cm]
	&=
	{\sigma\over \sigma+\delta_{ab}}\, 
	\mathbb{E}
	[
	g(T^{-1}(Z_1))
	]
	+
	{\delta_{ab}\over \sigma+\delta_{ab}}\, 
	\mathbb{E}
	[
	g(T^{-1}(Z_2))
	].
\end{align*}
This completes the proof.
\end{proof}

\begin{proposition}\label{dist-exp-1}
	If  $T(x)=x^{-s}$, $x>0$ and $s\neq 0$, then the real moments of $X$ are  given by
	\begin{align*}
		\mathbb{E}(X^q)
		=
		(\mu\sigma)^{q\over s} \, {\Gamma(\mu-{q\over s})\over\Gamma(\mu)}
\left[
		{\sigma\over \sigma+\delta_{ab}}
		+
		{\delta_{ab}\over \sigma+\delta_{ab}}\, 
		{(\mu-{q\over s})\over \mu}
\right],
		\quad \text{where} \ \mu-{q\over s}>0.
	\end{align*}
	%
\end{proposition}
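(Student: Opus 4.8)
The plan is to reduce everything to fractional moments of Gamma random variables via Proposition \ref{expectation}. First I would apply Proposition \ref{expectation} with the choice $g(x)=x^q$, which gives
\begin{align*}
\mathbb{E}(X^q)
=
{\sigma\over \sigma+\delta_{ab}}\,\mathbb{E}\big[(T^{-1}(Z_1))^q\big]
+
{\delta_{ab}\over \sigma+\delta_{ab}}\,\mathbb{E}\big[(T^{-1}(Z_2))^q\big],
\end{align*}
where $Z_j\sim{\rm Gamma}(\mu+j-1,1/(\mu\sigma))$, $j=1,2$. Since $T(x)=x^{-s}$ with $s\neq 0$, its inverse is $T^{-1}(z)=z^{-1/s}$ (well defined on $(0,\infty)$ irrespective of the sign of $s$), so $(T^{-1}(Z_j))^q=Z_j^{-q/s}$, and the problem is reduced to computing $\mathbb{E}\big[Z_j^{-q/s}\big]$.

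Next I would record the standard fractional-moment identity for the Gamma law: if $Z\sim{\rm Gamma}(\alpha,\theta)$ (shape $\alpha$, scale $\theta$), then for any real $r$ with $\alpha+r>0$,
\begin{align*}
\mathbb{E}(Z^{r})
=
{1\over \Gamma(\alpha)\,\theta^{\alpha}}\int_0^\infty z^{\alpha+r-1}\exp(-z/\theta)\,{\rm d}z
=
\theta^{r}\,{\Gamma(\alpha+r)\over \Gamma(\alpha)},
\end{align*}
the integral being finite precisely when $\alpha+r>0$. Applying this with $\alpha=\mu+j-1$, $\theta=1/(\mu\sigma)$ and $r=-q/s$ yields
\begin{align*}
\mathbb{E}\big[Z_j^{-q/s}\big]
=
(\mu\sigma)^{q/s}\,{\Gamma\big(\mu+j-1-{q\over s}\big)\over \Gamma(\mu+j-1)},
\qquad j=1,2,
\end{align*}
valid whenever $\mu+j-1-q/s>0$; for $j=1$ this is exactly the stated condition $\mu-q/s>0$, and for $j=2$ it is then automatic.

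Finally I would substitute $j=1$ and $j=2$ into the two-term expression above. The $j=1$ term contributes the common factor $(\mu\sigma)^{q/s}\,\Gamma(\mu-q/s)/\Gamma(\mu)$ directly. For the $j=2$ term I would use the functional equation $\Gamma(\mu+1-q/s)=(\mu-q/s)\,\Gamma(\mu-q/s)$ in the numerator and $\Gamma(\mu+1)=\mu\,\Gamma(\mu)$ in the denominator, so that
\begin{align*}
\mathbb{E}\big[Z_2^{-q/s}\big]
=
(\mu\sigma)^{q/s}\,{\Gamma(\mu-{q\over s})\over \Gamma(\mu)}\cdot{\mu-{q\over s}\over \mu}.
\end{align*}
Pulling out $(\mu\sigma)^{q/s}\,\Gamma(\mu-q/s)/\Gamma(\mu)$ from both terms gives precisely the claimed formula. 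There is no genuine obstacle here; the only points requiring a little care are (i) stating the domain of validity $\mu-q/s>0$ correctly and noting it suffices for both Gamma expectations, and (ii) spotting that the functional equation for $\Gamma$ is what exposes the common factor and produces the bracketed weight $[\,\sigma/(\sigma+\delta_{ab})+(\delta_{ab}/(\sigma+\delta_{ab}))(\mu-q/s)/\mu\,]$.
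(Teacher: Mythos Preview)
Your proposal is correct and follows essentially the same route as the paper: apply Proposition~\ref{expectation} with $g(x)=x^q$ and $T^{-1}(z)=z^{-1/s}$ to reduce to $\mathbb{E}(Z_j^{-q/s})$, then invoke the Gamma fractional-moment identity $\mathbb{E}(Z^\nu)=\theta^{\nu}\Gamma(k+\nu)/\Gamma(k)$. Your write-up is in fact slightly more explicit than the paper's, spelling out the use of the functional equation $\Gamma(z+1)=z\Gamma(z)$ to extract the common factor and checking that the condition $\mu-q/s>0$ covers both terms.
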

\begin{proof}
By using Proposition \ref{expectation}, with $g(x)=x^q$ and $T(x)=x^{-s}$, $x>0$ and $s\neq 0$, we have	
\begin{align}\label{id-pre-1}
	\mathbb{E}(X^q)
	=
	{\sigma\over \sigma+\delta_{ab}}\, 
	\mathbb{E}
	(
	Z_1^{-q/s}
	)
	+
	{\delta_{ab}\over \sigma+\delta_{ab}}\, 
	\mathbb{E}
	(
	Z_2^{-q/s}
	).
\end{align}
As $Z_j\sim{\rm Gamma}(\mu+j-1,1/(\mu\sigma))$, $j=1,2$ and
\begin{align*}
\mathbb{E}(Z^\nu)=\theta^{\nu}\, {\Gamma(k+\nu)\over\Gamma(k)},
\quad 
Z\sim {\rm Gamma}(k,\theta),  \
\nu>-k,
\end{align*}
from \eqref{id-pre-1}, the proof follows.
\end{proof}

\begin{proposition}\label{dist-exp-2}
		If  $T(x)=x^{-s}$, $x>0$ and $s\neq 0$, then
	\begin{multline*}
	\mathbb{E}\left[{X^{-p}\log(X)}\right]
	=
	-{1\over s}\,{\Gamma(\mu+{p\over s})\over (\mu\sigma)^{p/s}\Gamma(\mu)}
\Bigg\{
{\sigma\over \sigma+\delta_{ab}}
\left[
\psi^{(0)}\left(\mu+{p\over s}\right)-\log(\mu\sigma)
\right]
	\\[0,2cm]
+
{\delta_{ab}\over \sigma+\delta_{ab}}\, 
{\mu+{p\over s}\over\mu}
\left[
\psi^{(0)}\left(\mu+{p\over s}+1\right)-\log(\mu\sigma)
\right]
\Bigg\},
\quad
\text{where} \ \mu+{p\over s}>0.
	\end{multline*}
\end{proposition}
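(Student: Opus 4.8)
The plan is to reduce everything to a Gamma computation via Proposition \ref{expectation}, exactly as in the proof of Proposition \ref{dist-exp-2}'s companion results. First I would apply Proposition \ref{expectation} with $g(x)=x^{-p}\log(x)$ and $T(x)=x^{-s}$, so that $T^{-1}(z)=z^{-1/s}$ and
\begin{align*}
g(T^{-1}(z))=z^{p/s}\log\!\big(z^{-1/s}\big)=-{1\over s}\,z^{p/s}\log(z).
\end{align*}
This yields
\begin{align*}
\mathbb{E}\!\left[X^{-p}\log(X)\right]
=-{1\over s}\left[
{\sigma\over\sigma+\delta_{ab}}\,\mathbb{E}\!\big(Z_1^{p/s}\log Z_1\big)
+{\delta_{ab}\over\sigma+\delta_{ab}}\,\mathbb{E}\!\big(Z_2^{p/s}\log Z_2\big)
\right],
\end{align*}
with $Z_j\sim{\rm Gamma}(\mu+j-1,1/(\mu\sigma))$, so the problem is reduced to evaluating $\mathbb{E}(Z^\nu\log Z)$ for a Gamma variable with $\nu=p/s$.

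Next I would obtain the key lemma $\mathbb{E}(Z^\nu\log Z)=\theta^\nu\,{\Gamma(k+\nu)\over\Gamma(k)}\big[\psi^{(0)}(k+\nu)+\log\theta\big]$ for $Z\sim{\rm Gamma}(k,\theta)$ and $\nu>-k$, by differentiating the identity $\mathbb{E}(Z^\nu)=\theta^\nu\Gamma(k+\nu)/\Gamma(k)$ (already quoted in the excerpt) with respect to $\nu$, using $\Gamma'(x)=\Gamma(x)\psi^{(0)}(x)$ and the product rule on $\theta^\nu$. Applying this with $\theta=1/(\mu\sigma)$ gives, for $Z_1$ (here $k=\mu$),
\begin{align*}
\mathbb{E}\!\big(Z_1^{p/s}\log Z_1\big)
={1\over(\mu\sigma)^{p/s}}\,{\Gamma(\mu+{p\over s})\over\Gamma(\mu)}
\left[\psi^{(0)}\!\Big(\mu+{p\over s}\Big)-\log(\mu\sigma)\right],
\end{align*}
and for $Z_2$ (here $k=\mu+1$), after using $\Gamma(\mu+1+{p\over s})=(\mu+{p\over s})\Gamma(\mu+{p\over s})$ and $\Gamma(\mu+1)=\mu\Gamma(\mu)$ to rewrite the ratio as ${\mu+p/s\over\mu}\,{\Gamma(\mu+p/s)\over\Gamma(\mu)}$,
\begin{align*}
\mathbb{E}\!\big(Z_2^{p/s}\log Z_2\big)
={1\over(\mu\sigma)^{p/s}}\,{\mu+{p\over s}\over\mu}\,{\Gamma(\mu+{p\over s})\over\Gamma(\mu)}
\left[\psi^{(0)}\!\Big(\mu+{p\over s}+1\Big)-\log(\mu\sigma)\right].
\end{align*}
Substituting both expressions into the displayed formula for $\mathbb{E}[X^{-p}\log(X)]$ and factoring out $-{1\over s}\,{\Gamma(\mu+p/s)\over(\mu\sigma)^{p/s}\Gamma(\mu)}$ produces precisely the claimed identity; the constraint $\mu+p/s>0$ is exactly what is needed for the Gamma arguments (for both $Z_1$ and $Z_2$, since $\mu+p/s>0$ implies $\mu+p/s+1>0$) and guarantees $\nu>-k$ in the lemma.

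The only genuinely delicate point is the justification of differentiating $\mathbb{E}(Z^\nu)$ under the expectation sign; I would handle this by noting that on a compact interval of $\nu$-values contained in $(-k,\infty)$ the integrand $z^\nu\log z\,f_Z(z)$ is dominated by an integrable function (one can bound $|z^\nu\log z|$ by a combination of $z^{\nu-\epsilon}$ and $z^{\nu+\epsilon}$ near $0$ and $\infty$, both integrable against the Gamma density for small $\epsilon$), so a standard dominated-convergence argument permits the interchange. Everything else is the routine algebraic bookkeeping sketched above.
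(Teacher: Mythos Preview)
Your proof is correct and follows essentially the same approach as the paper's: apply Proposition~\ref{expectation} with $g(x)=x^{-p}\log(x)$ and $T(x)=x^{-s}$ to reduce to the Gamma expectations $\mathbb{E}(Z_j^{p/s}\log Z_j)$, then invoke the identity $\mathbb{E}(Z^\nu\log Z)=\theta^\nu\,{\Gamma(k+\nu)\over\Gamma(k)}[\psi^{(0)}(k+\nu)+\log\theta]$ (which the paper simply quotes as formula~\eqref{exp-log} rather than deriving by differentiation, and without your dominated-convergence justification). The explicit use of $\Gamma(\mu+1+p/s)/\Gamma(\mu+1)=\frac{\mu+p/s}{\mu}\,\Gamma(\mu+p/s)/\Gamma(\mu)$ is likewise left implicit in the paper but is exactly the algebra needed.
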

\begin{proof}
By using Proposition \ref{expectation}, with $g(x)=\log(x)/x^p$ and $T(x)=x^{-s}$, $x>0$ and $s\neq 0$, we have	
\begin{align}\label{id-pre-2}
		\mathbb{E}\left[{X^{-p}\log(X)}\right]
	=
	-{1\over s}
	\left\{
	{\sigma\over \sigma+\delta_{ab}}\, 
	\mathbb{E}
	\left[{Z_1^{p/s}\log(Z_1)}\right]
	+
	{\delta_{ab}\over \sigma+\delta_{ab}}\, 
	\mathbb{E}
	\left[{Z_2^{p/s}\log(Z_2)}\right]
	\right\}.
\end{align}
As $Z_j\sim{\rm Gamma}(\mu+j-1,1/(\mu\sigma))$, $j=1,2$ and
\begin{align}\label{exp-log}
\mathbb{E}\left[Z^{\nu}\log(Z)\right]
=
{\Gamma(k+\nu)\over \theta^{-\nu}\Gamma(k)}\,
[\psi^{(0)}(k+\nu)-\log(1/\theta)],
\quad 
Z\sim {\rm Gamma}(k,\theta), \ 
k>-\nu,
\end{align}
from \eqref{id-pre-2}, the proof follows.
\end{proof}

\begin{proposition}\label{dist-exp-3}
	If  $T(x)=x^{-s}$, $x>0$ and $s\neq 0$, then
	\begin{align*}
		\mathbb{E}\left[{X^{-s}\log(X^{-s})\over 1+\delta_{ab}X^{-s}}\right]
		=
		{1\over \sigma+\delta_{ab}}\,
		[\psi^{(0)}(\mu+1)-\log(\mu\sigma)].
	\end{align*}		
	\end{proposition}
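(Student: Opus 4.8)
The plan is to apply Proposition \ref{expectation} with the specific choice $g(x) = \dfrac{x^{-s}\log(x^{-s})}{1+\delta_{ab}x^{-s}}$ and $T(x) = x^{-s}$, so that $T^{-1}(z) = z^{-1/s}$ and hence $T(T^{-1}(Z_j)) = Z_j$. Under this substitution, $x^{-s}$ becomes $Z_j$, $\log(x^{-s})$ becomes $\log(Z_j)$, and the denominator $1+\delta_{ab}x^{-s}$ becomes $1+\delta_{ab}Z_j$. Thus the decomposition in Proposition \ref{expectation} reads
\begin{align*}
\mathbb{E}\left[{X^{-s}\log(X^{-s})\over 1+\delta_{ab}X^{-s}}\right]
=
{\sigma\over \sigma+\delta_{ab}}\,
\mathbb{E}\!\left[{Z_1\log(Z_1)\over 1+\delta_{ab}Z_1}\right]
+
{\delta_{ab}\over \sigma+\delta_{ab}}\,
\mathbb{E}\!\left[{Z_2\log(Z_2)\over 1+\delta_{ab}Z_2}\right].
\end{align*}

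Next I would split into the two cases $\delta_{ab}=0$ (i.e. $a\neq b$) and $\delta_{ab}=1$ (i.e. $a=b$), since the Kronecker delta only ever takes these values. When $\delta_{ab}=0$, the right side collapses to $\mathbb{E}[Z_1\log(Z_1)]$ with $Z_1\sim{\rm Gamma}(\mu,1/(\mu\sigma))$, and the formula \eqref{exp-log} with $k=\mu$, $\nu=1$, $\theta=1/(\mu\sigma)$ gives $\mathbb{E}[Z_1\log(Z_1)] = \dfrac{\Gamma(\mu+1)}{(\mu\sigma)^{-1}\Gamma(\mu)}[\psi^{(0)}(\mu+1)-\log(\mu\sigma)] = \mu\,(\mu\sigma)^{-1}\cdot\mu\,[\psi^{(0)}(\mu+1)-\log(\mu\sigma)]$; wait — more carefully, $\Gamma(\mu+1)/\Gamma(\mu)=\mu$ and $\theta^{\nu}=1/(\mu\sigma)$, so this equals $\dfrac{\mu}{\mu\sigma}[\psi^{(0)}(\mu+1)-\log(\mu\sigma)] = \dfrac{1}{\sigma}[\psi^{(0)}(\mu+1)-\log(\mu\sigma)]$, which matches the claimed expression since $\sigma+\delta_{ab}=\sigma$ in this case. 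When $\delta_{ab}=1$, I need to evaluate ${1\over 2}\mathbb{E}\!\left[\dfrac{Z_1\log(Z_1)}{1+Z_1}\right] + {1\over 2}\cdot\dfrac{1}{\sigma}\cdot\text{(something)}$ — actually $\sigma+1$ in the denominators — and the key simplification is the density ratio: since $Z_2\sim{\rm Gamma}(\mu+1,1/(\mu\sigma))$ and $Z_1\sim{\rm Gamma}(\mu,1/(\mu\sigma))$, one has $f_{Z_2}(z) = \dfrac{(\mu\sigma)\,z}{\mu}\,f_{Z_1}(z)$, equivalently $\dfrac{z}{1+z}f_{Z_2}(z)\,\text{-type}$ manipulations let the two integrals combine. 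Concretely, $\mathbb{E}\!\left[\dfrac{Z_1\log Z_1}{1+Z_1}\right]$ has integrand $\dfrac{z\log z}{1+z}f_{Z_1}(z)$, and I would rewrite $\dfrac{z}{1+z} = 1 - \dfrac{1}{1+z}$, or better, observe $z\,f_{Z_1}(z) = \dfrac{\mu}{\mu\sigma}f_{Z_2}(z) = \dfrac{1}{\sigma}f_{Z_2}(z)$ so that $\mathbb{E}\!\left[\dfrac{Z_1\log Z_1}{1+Z_1}\right] = \dfrac{1}{\sigma}\mathbb{E}\!\left[\dfrac{\log Z_2}{1+Z_2}\right]$; then the $\delta_{ab}=1$ sum becomes $\dfrac{1}{\sigma+1}\left(\dfrac{1}{\sigma}\cdot\dfrac{\sigma}{1}\mathbb{E}\!\left[\dfrac{\log Z_2}{1+Z_2}\right]\cdot\text{...}\right)$ — the point is that the two terms share a common $\mathbb{E}\!\left[\dfrac{\log Z_2}{1+Z_2}\right]$ factor and combine into $\mathbb{E}[\log Z_2] - \mathbb{E}\!\left[\dfrac{\log Z_2}{?}\right]$, ultimately reducing to $\mathbb{E}[\log Z_2]$ via $\dfrac{Z_2}{1+Z_2}+\dfrac{1}{1+Z_2}=1$ after rescaling.

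A cleaner route, which I would actually write up, is to avoid case-splitting: directly expand the numerator using $Z_j = (1+\delta_{ab}Z_j) \cdot \text{weight} + \text{correction}$, or rather use that for $j=1$, $Z_1\,f_{Z_1}(z) = \dfrac{1}{\sigma}\,f_{Z_2}(z)$ (valid regardless of $\delta_{ab}$, as this is just the Gamma density identity with the parametrization from \eqref{def-fi}). Then
$\dfrac{\sigma}{\sigma+\delta_{ab}}\mathbb{E}\!\left[\dfrac{Z_1\log Z_1}{1+\delta_{ab}Z_1}\right] = \dfrac{\sigma}{\sigma+\delta_{ab}}\cdot\dfrac{1}{\sigma}\mathbb{E}\!\left[\dfrac{\log Z_2}{1+\delta_{ab}Z_2}\right] = \dfrac{1}{\sigma+\delta_{ab}}\mathbb{E}\!\left[\dfrac{\log Z_2}{1+\delta_{ab}Z_2}\right]$, so adding the second term gives $\dfrac{1}{\sigma+\delta_{ab}}\mathbb{E}\!\left[\dfrac{(1+\delta_{ab}Z_2)\log Z_2}{1+\delta_{ab}Z_2}\right] = \dfrac{1}{\sigma+\delta_{ab}}\mathbb{E}[\log Z_2]$. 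Finally, applying \eqref{exp-log} with $\nu=0$, $k=\mu+1$, $\theta=1/(\mu\sigma)$ — or simply the standard fact $\mathbb{E}[\log Z_2] = \psi^{(0)}(\mu+1) + \log\theta = \psi^{(0)}(\mu+1) - \log(\mu\sigma)$ — yields the claimed identity. The main obstacle is recognizing the density-ratio identity $z\,f_{Z_1}(z) = \sigma^{-1}f_{Z_2}(z)$ between the two Gamma densities (which is what makes the $\delta_{ab}$-dependent denominator telescope away); once that is in hand the rest is a one-line manipulation plus a table lookup.
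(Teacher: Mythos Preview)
Your final ``cleaner route'' is correct: applying Proposition~\ref{expectation}, using the density-ratio identity $z\,f_{Z_1}(z)=\sigma^{-1}f_{Z_2}(z)$ to turn the $Z_1$-term into $\dfrac{1}{\sigma+\delta_{ab}}\mathbb{E}\!\left[\dfrac{\log Z_2}{1+\delta_{ab}Z_2}\right]$, and then combining with the $Z_2$-term via $\dfrac{1}{1+\delta_{ab}Z_2}+\dfrac{\delta_{ab}Z_2}{1+\delta_{ab}Z_2}=1$ does yield $\dfrac{1}{\sigma+\delta_{ab}}\mathbb{E}[\log Z_2]$, after which \eqref{exp-log} finishes the job. (The exploratory case-split and the mis-fired arithmetic before it should simply be deleted from any write-up.)

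The paper takes a shorter and conceptually different route: it does \emph{not} invoke Proposition~\ref{expectation} at all, but instead writes the expectation directly against the density \eqref{pdf-1} and observes that the denominator $1+\delta_{ab}X^{-s}$ in the integrand cancels exactly with the factor $[1+\delta_{ab}T(x)]$ in the density. After the substitution $y=x^{-s}$, what remains is literally $\dfrac{1}{\sigma+\delta_{ab}}\mathbb{E}[\log Y]$ for $Y\sim\mathrm{Gamma}(\mu+1,1/(\mu\sigma))$. Your density-ratio trick is the mixture-side shadow of this same cancellation, so the two arguments are morally equivalent; the paper's version is more economical (one cancellation instead of a split--rewrite--recombine), while yours has the virtue of being methodologically uniform with the proofs of Propositions~\ref{dist-exp-1} and~\ref{dist-exp-2}, which also start from Proposition~\ref{expectation}.
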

	\begin{proof}
By using the definition \eqref{pdf-1} with $T(x)=x^{-s}$, $x>0$ and $s\neq 0$, and by making the change of variable $y=x^{-s}$, note that
	\begin{align*}
		\mathbb{E}\left[{X^{-s}\log(X^{-s})\over 1+\delta_{ab}X^{-s}}\right]
	&=
	{s(\mu\sigma)^{\mu+1} \over (\sigma+\delta_{ab})\Gamma(\mu+1)}
	\int_{0}^\infty
	\log(x^{-s})\,
	(x^{-s})^{\mu+{1\over s}+1}
	\exp\left\{-\mu \sigma x^{-s}\right\}
	{\rm d}x
	\\[0,2cm]
	&=
	{(\mu\sigma)^{\mu+1} \over (\sigma+\delta_{ab})\Gamma(\mu+1)}
\int_{0}^\infty
\log(y) y^{\mu}
\exp\left\{-\mu \sigma y\right\}
{\rm d}y
	\\[0,2cm]
&=
	{1\over \sigma+\delta_{ab}}\, \mathbb{E}[\log(Y)],
	\end{align*}
	where $Y\sim {\rm Gamma}(\mu+1,1/(\mu\sigma))$.
	By employing formula in \eqref{exp-log} with $\nu=0$, $k=\mu+1$ and $\theta=1/(\mu\sigma)$, in the last equality,
	the proof readily follows.
	\end{proof}

\subsection{Main formulas}
By using Proposition \ref{dist-exp-1} with $q=-s$, we have
	\begin{align}\label{1-identity}
	\mathbb{E}\left({X^{-s}}\right)
	=
{1\over \sigma}
\left(
1
+
{\delta_{ab}\over \sigma+\delta_{ab}}\,
{1\over \mu}
\right).
\end{align}

By using Proposition \ref{dist-exp-2} with $p=0$, we have
\begin{align}\label{expect-log}
	\mathbb{E}[\log(X)]
	&=
	-{1\over s}
	\left\{
	{\sigma\over \sigma+\delta_{ab}}\, 
	[
	\psi^{(0)}(\mu)-\log(\mu\sigma)
	]
	+
	{\delta_{ab}\over \sigma+\delta_{ab}}\, 
	[
	\psi^{(0)}(\mu+1)-\log(\mu\sigma)
	]
	\right\}
	\nonumber
		\\[0,2cm]
	&=
	-{1\over s}
	\left[
	\psi^{(0)}(\mu+1)-\log(\mu\sigma)
	-
	{1\over\mu}
	\left(
	{\sigma\over \sigma+\delta_{ab}}
	\right)
	\right],
\end{align}
where in the last line the identity $\psi^{(0)}(z+1)=\psi^{(0)}(z)+1/z$ has been used.

By using Proposition \ref{dist-exp-2} with $p=s$, we have
\begin{align*}
\mathbb{E}\left[{X^{-s}\log(X)}\right]
&=
-{1\over s}\,{1\over \sigma}
\Bigg\{
{\sigma\over \sigma+\delta_{ab}}\,
[
\psi^{(0)}(\mu+1)-\log(\mu\sigma)
]
+
{\delta_{ab}\over \sigma+\delta_{ab}}\, 
{\mu+1\over\mu}\,
[
\psi^{(0)}(\mu+2)-\log(\mu\sigma)
]
\Bigg\}
\\[0,2cm]
&=
{1\over \sigma}
\Bigg\{
-{1\over s}
\left[
\psi^{(0)}(\mu+1)-\log(\mu\sigma)
	-
{1\over\mu}
\left(
{\sigma\over \sigma+\delta_{ab}}
\right)
\right]
\\[0,2cm]
&
\hspace*{6.5cm}
-
{1\over s\mu}
\left\{
{\delta_{ab}\over \sigma+\delta_{ab}}\, 
[
\psi^{(0)}(\mu+1)-\log(\mu\sigma)
]
+
1
\right\}
\Bigg\},
\end{align*}
where in the last equality we have again used the identity $\psi^{(0)}(z+1)=\psi^{(0)}(z)+1/z$. By \eqref{expect-log} and Proposition \ref{dist-exp-3} note that $\mathbb{E}\left[{X^{-s}\log(X)}\right]$ can be written as
\begin{align*}
\mathbb{E}\left[{X^{-s}\log(X)}\right]
=
{1\over \sigma}
\Bigg\{
\mathbb{E}[\log(X)]
-
{1\over s\mu}
\left\{
{\delta_{ab}}
\mathbb{E}\left[{X^{-s}\log(X^{-s})\over 1+\delta_{ab}X^{-s}}\right]
+
1
\right\}
\Bigg\},
\end{align*}
from which we can express $\mu$ as follows:
\begin{align}\label{2-identity}
	\mu
	=
	\dfrac{\displaystyle
		{\delta_{ab}}
		\mathbb{E}\left[h_1(X)\right]
		+
		1
		}
		{\displaystyle
			\sigma \mathbb{E}\left[h_2(X)\right]
			-
			\mathbb{E}[h_3(X)]
			},
\end{align}
where we have adopted the following notations:
\begin{align}\label{def-h}
	h_1(x)\equiv {x^{-s}\log(x^{-s})\over 1+\delta_{ab}x^{-s}},
	\quad
	h_2(x)\equiv {x^{-s}\log(x^{-s})},
\quad
	h_3(x)\equiv \log(x^{-s}).
\end{align}

Plugging \eqref{2-identity} into \eqref{1-identity} and solving for $\sigma$ gives
\begin{align}\label{sigma-function}
	\sigma
	&=
	\dfrac{
	1
	-
	\delta_{ab}\mathbb{E}\left[h_4(X)\right]
	+
	{\delta_{ab}\mathbb{E}\left[h_2(X)\right]\over 	
		\delta_{ab}
		\mathbb{E}\left[h_1(X)\right]
		+
		1}
	}
		{
		2\mathbb{E}\left[h_4(X)\right]
	}
	\nonumber
		\\[0,2cm]
		&+
			\dfrac{
		\sqrt{\left\{	1
			-
			\delta_{ab}\mathbb{E}\left[h_4(X)\right]
			+
			{\delta_{ab}\mathbb{E}\left[h_2(X)\right]\over 	
				\delta_{ab}
				\mathbb{E}\left[h_1(X)\right]
				+
				1}\right\}^2
				+
			4\mathbb{E}\left[h_4(X)\right]
			\left\{\delta_{ab}-{\delta_{ab}\mathbb{E}\left[h_3(X)\right]\over
			\delta_{ab}
			\mathbb{E}\left[h_1(X)\right]
			+
			1 }\right\}
			}
	}
	{
			2\mathbb{E}\left[h_4(X)\right]
	},
\end{align}
where
\begin{align}\label{def-h-alt}
	h_4(x)\equiv T(x)= x^{-s}.
\end{align}

\section{Closed-form estimators}\label{The New Estimators}

Let $\{X_i : i = 1,\ldots , n\}$ be a univariate random sample of size $n$ from  $X$ having density in \eqref{pdf-1}. 

By using the method of moments in \eqref{sigma-function}, the corresponding sample moment to obtain the estimator of $\sigma$ is
\begin{align}\label{sigma-estimator-1}
	\widehat{\sigma}
	=
	\dfrac{
		1
		-
		\delta_{ab}\overline{X}_4
		+
		{\delta_{ab}\overline{X}_2\over 	
			\delta_{ab}
			\overline{X}_1
			+
			1}
		+
		\sqrt{\left\{	1
			-
			\delta_{ab}\overline{X}_4
			+
			{\delta_{ab}\overline{X}_2\over 	
				\delta_{ab}
				\overline{X}_1
				+
				1}\right\}^2
			+
			4\overline{X}_4
			\left\{\delta_{ab}-{\delta_{ab}\overline{X}_3\over
				\delta_{ab}
				\overline{X}_1
				+
				1 }\right\}
		}
	}
	{
		2\overline{X}_4
	},
\end{align}
where we have defined
\begin{align}\label{def-mean-aritm}
	\overline{\boldsymbol{X}}
	\equiv
		\begin{pmatrix}
		\overline{X}_1\\[0,2cm]
		\overline{X}_2\\[0,2cm]
		\overline{X}_3\\[0,2cm]
		\overline{X}_4
	\end{pmatrix}
	=
	{1\over n}
	\sum_{i=1}^{n}
	\begin{pmatrix}
	h_1(X_i)\\[0,2cm]
		h_2(X_i)\\[0,2cm]
			h_3(X_i)\\[0,2cm]
				h_4(X_i)
	\end{pmatrix},
\end{align}
with $h_1,h_2,h_3$ and $h_4$ being as in \eqref{def-h} and \eqref{def-h-alt}, respectively.

Plugging \eqref{sigma-estimator-1} in \eqref{2-identity} and using the method of moments, the sample moment to obtain the estimator of $\mu$ is
\begin{align}\label{mu-estimator}
	\widehat{\mu}
	=
	\dfrac{\displaystyle
		{\delta_{ab}}
		\overline{X}_1
		+
		1
	}
	{\displaystyle
		\widehat{\sigma} \overline{X}_2
		-
		\overline{X}_3
	}.
\end{align}

\subsection{Case $a\neq b$}
In this case, from \eqref{sigma-estimator-1} and \eqref{mu-estimator}, we obtain the new estimators for $\sigma$ and $\mu$ as
\begin{align}\label{sigma-estimator-1-1}
	\widehat{\sigma}
	=
	\dfrac{
1
	}
	{
		\overline{X}_4
	}
\end{align}
and 
\begin{align}\label{mu-estimator-1}
	\widehat{\mu}
	=
	\dfrac{
		1
	}
	{\displaystyle
		\widehat{\sigma} \overline{X}_2
		-
		\overline{X}_3
	}.
\end{align}
respectively, where $\overline{X}_2, \overline{X}_3$ and $\overline{X}_4$ are as given in \eqref{def-mean-aritm}. Note that, in this case, $\widehat{\sigma}$ coincides with the maximum likelihood (ML) estimator \citep[see Equation 6 in][]{Vila2024}.

\subsection{Case $a= b$}

In this case, from \eqref{sigma-estimator-1} and \eqref{mu-estimator}, we obtain the new estimators for $\sigma$ and $\mu$ as
\begin{align}\label{sigma-estimator-1-0}
	\widehat{\sigma}
	=
	\dfrac{
		1
		-
		\overline{X}_4
		+
		{\overline{X}_2\over 	
			\overline{X}_1
			+
			1}
		+
		\sqrt{\left\{	1
			-
			\overline{X}_4
			+
			{\overline{X}_2\over 	
				\overline{X}_1
				+
				1}\right\}^2
			+
			4\overline{X}_4
			\left\{1-{\overline{X}_3\over
				\overline{X}_1
				+
				1 }\right\}
		}
	}
	{
		2\overline{X}_4
	}
\end{align}
and
\begin{align}\label{mu-estimator-0}
	\widehat{\mu}
	=
	\dfrac{\displaystyle
		\overline{X}_1
		+
		1
	}
	{\displaystyle
		\widehat{\sigma} \overline{X}_2
		-
		\overline{X}_3
	},
\end{align}
respectively, where $\overline{X}_1$ (with $a=b$), $\overline{X}_2, \overline{X}_3$ and $\overline{X}_4$ are as given in \eqref{def-mean-aritm}.

\begin{remark}
It is simple to observe that the weighted probability distributions in \eqref{pdf-1} belongs to the exponential family with vector of sufficient statistics given by $(T(x),\log(T(x)))^\top$. Furthermore, for exponential families with sufficient statistics  $(T(x),\log(T(x)))^\top$ it is well-known that the moment-type  estimators are in fact the maximum likelihood estimators \citep{Davidson1974}. The estimators $\widehat{\mu}$ and $\widehat{\sigma}$ (in \eqref{sigma-estimator-1-1} and \eqref{mu-estimator-1} for case $a\neq b$, and \eqref{sigma-estimator-1-0} and \eqref{mu-estimator-0} for case $a= b$) were derived using the moment-type method (modification of the method of moments) with vector of statistics
\begin{align*}
\left(T(x),\log(T(x)),T(x)\log(T(x)),{T(x)\log(T(x))\over 1+\delta_{ab}T(x)}\right)^\top,
\quad
\text{for}
\ 
T(x)=x^{-s}, \, x>0 \, \text{and} \, s\neq 0.
\end{align*}
By using the maximum likelihood equations corresponding to an extension of the weighted probability model in \eqref{pdf-1}, simple closed-forms for the estimators of $\mu$ and $\sigma$, denoted by $\widehat{\mu}_{\bullet}$ and $\widehat{\sigma}_{\bullet}$, respectively, were derived in reference \cite{Vila2024} (case $a\neq b$) and \cite{Vila2024b} (case $a= b$). Note that these estimators are not, in fact, the maximum likelihood estimators corresponding to the probability distribution in \eqref{pdf-1}, so we cannot say with certainty that the estimators provided by the moment-type method, $\widehat{\mu}$ and $\widehat{\sigma}$, are the same as $\widehat{\mu}_{\bullet}$ and $\widehat{\sigma}_{\bullet}$ in the special case $T(x)=x^{-s}$, $x>0$ and $s\neq 0$. In other words, we cannot apply the results obtained in \cite{Davidson1974}. However, to our surprise, through a simple but laborious calculation and some  simulations, we found that estimators $\widehat{\mu}$ and $\widehat{\sigma}$ (in \eqref{sigma-estimator-1-1} and \eqref{mu-estimator-1} for case $a\neq b$, and \eqref{sigma-estimator-1-0} and \eqref{mu-estimator-0} for case $a= b$) coincide with the estimators $\widehat{\mu}_{\bullet}$ and $\widehat{\sigma}_{\bullet}$ obtained in \cite{Vila2024} (case $a\neq b$) and \cite{Vila2024b} (case $a= b$).
\end{remark}

\section{Asymptotic behavior of estimators}\label{Asymptotic behavior of estimators}

Let $X_1,\ldots, X_n$ be a random sample of size $n$ from the  variable $X$ with PDF \eqref{pdf-1}. If we further let  $\overline{\boldsymbol{X}}=(\overline{X}_1,\overline{X}_2,\overline{X}_3,\overline{X}_4)^\top$ and 
\begin{align*}
\boldsymbol{X}
\equiv
\begin{pmatrix}
	h_1(X)\\[0,2cm]
	h_2(X)\\[0,2cm]
	h_3(X)\\[0,2cm]
	h_4(X)
\end{pmatrix},
\end{align*}
with $\overline{X}_i$, $i=1,\ldots,6$, as given in \eqref{def-mean-aritm}, and  $h_1,h_2,h_3$ and $h_4$ being as in \eqref{def-h} and \eqref{def-h-alt}, respectively.
By applying strong law of large numbers, we have
\begin{align*}
	\overline{\boldsymbol{X}}\stackrel{\rm a.s.}{\longrightarrow}
	\mathbb{E}(\boldsymbol{X}),
\end{align*}
where ``$\stackrel{\rm a.s.}{\longrightarrow}$'' denotes almost sure convergence.
Hence, continuous-mapping theorem \citep{Billingsley1969}  gives
\begin{align}\label{id-1}
\widehat{\sigma}=g_1(\overline{\boldsymbol{X}})\stackrel{\rm a.s.}{\longrightarrow}
g_1(\mathbb{E}(\boldsymbol{X}))
\quad \text{and} \quad
\widehat{\mu}=g_2(\overline{\boldsymbol{X}})\stackrel{\rm a.s.}{\longrightarrow}
g_2(\mathbb{E}(\boldsymbol{X})),
\end{align}
with
\begin{align}\label{def-g1}
g_1(x_1,x_2,x_3,x_4)
\equiv
	\dfrac{
	1
	-
	\delta_{ab}x_4
	+
	{\delta_{ab}x_2\over 	
		\delta_{ab}
		x_1
		+
		1}
	+
	\sqrt{\left\{	1
		-
		\delta_{ab}x_4
		+
		{\delta_{ab}x\over 	
			\delta_{ab}
			x_1
			+
			1}\right\}^2
		+
		4x_4
		\left\{\delta_{ab}-{\delta_{ab}x_3\over
			\delta_{ab}
			x_1
			+
			1 }\right\}
	}
}
{
	2x_4
}
\end{align}
and
\begin{align}\label{def-g2}
	g_2(x_1,x_2,x_3,x_4)
\equiv
	\dfrac{\displaystyle
	{\delta_{ab}}
	x_1
	+
	1
}
{\displaystyle
	g_1(x_1,x_2,x_3,x_4) x_2
	-
	x_3
}.
\end{align}

Furthermore, by Central limit theorem,
\begin{align*}
	\sqrt{n}\big[\overline{\boldsymbol{X}}-\mathbb{E}(\boldsymbol{X})\big]\stackrel{\mathscr D}{\longrightarrow} N_4(\bm 0, \bm\Sigma),
\end{align*}
where $\bm\Sigma$ denotes the covariance matrix of $\boldsymbol{X}$ and ``$\stackrel{\mathscr D}{\longrightarrow}$'' means convergence in distribution.
So, delta method provides
\begin{align}\label{id-2}
	\sqrt{n}
	\left[
	\begin{pmatrix}
	\widehat{\mu}
	\\[0,2cm]
	\widehat{\sigma}
	\end{pmatrix}
	-
	\begin{pmatrix}
	g_2(\mathbb{E}(\bm X))
	\\[0,2cm]
	g_1(\mathbb{E}(\bm X))
	\end{pmatrix}
	\right]
	\stackrel{\eqref{id-1}}{=}
		\sqrt{n}
	\left[
	\begin{pmatrix}
		g_2(\overline{\boldsymbol{X}})
		\\[0,2cm]
		g_1(\overline{\boldsymbol{X}})
	\end{pmatrix}
	-
	\begin{pmatrix}
		g_2(\mathbb{E}(\bm X))
		\\[0,2cm]
		g_1(\mathbb{E}(\bm X))
	\end{pmatrix}
	\right]
	\stackrel{\mathscr D}{\longrightarrow}
	N_2(\bm 0, \bm A\bm\Sigma \bm A^\top),
\end{align}
with $\bm A$ being the partial derivatives matrix defined as
\begin{align*}
	\bm A
	=
	\left. 
	\begin{pmatrix}
	\displaystyle
	{\partial g_2(\bm x)\over\partial x_1} & \displaystyle {\partial g_2(\bm x)\over\partial x_2} & \displaystyle {\partial g_2(\bm x)\over\partial x_3} & \displaystyle {\partial g_2(\bm x)\over\partial x_4} 
	\\[0,5cm]
	\displaystyle
	{\partial g_1(\bm x)\over\partial x_1} & \displaystyle {\partial g_1(\bm x)\over\partial x_2} & \displaystyle {\partial g_1(\bm x)\over\partial x_3} & \displaystyle {\partial g_1(\bm x)\over\partial x_4} 
	\end{pmatrix}\,
	\right\vert_{\bm x=\mathbb{E}(\bm X)}.
\end{align*}
For simplicity of presentation, we do not present the partial derivatives of $g_j$, $j=1,2$, here. Analogously to the calculation of $\mathbb{E}(\bm X)$, the second moments of the components of $\bm X$ can be determined which is sufficient to guarantee the existence of the matrix $\bm\Sigma$.

The following result shows that for generators of type $T(x)=x^{-s}$, $x>0$ and $s\ne 0$, the strong consistency property and a Central limit type theorem for the estimators $\widehat{\sigma}$ and $\widehat{\mu}$ are satisfied. 

\begin{theorem}
	If $T(x)=x^{-s}$, $x>0$ and $s\ne 0$, then
	$g_1(\mathbb{E}(\bm X))=\sigma$ and $g_2(\mathbb{E}(\bm X))=\mu$, where $g_1$ and $g_2$ are given in \eqref{def-g1} and \eqref{def-g2}, respectively. Moreover, from \eqref{id-2},
\begin{align*}
	\sqrt{n}
	\left[
	\begin{pmatrix}
		\widehat{\mu}
		\\[0,2cm]
		\widehat{\sigma}
	\end{pmatrix}
	-
	\begin{pmatrix}
		\mu
		\\[0,2cm]
		\sigma
	\end{pmatrix}
	\right]
	\stackrel{\mathscr D}{\longrightarrow}
	N_2(\bm 0, \bm A\bm\Sigma \bm A^\top),
\end{align*}	
	where $\bm A$ was given lines above and $\bm\Sigma$ is the covariance matrix of $\boldsymbol{X}$.
\end{theorem}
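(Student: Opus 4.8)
The plan is to verify the two algebraic identities $g_1(\mathbb{E}(\bm X))=\sigma$ and $g_2(\mathbb{E}(\bm X))=\mu$ directly, and then let the already-assembled machinery of Section~\ref{Asymptotic behavior of estimators} do the rest. First I would compute the four coordinates of $\mathbb{E}(\bm X)=(\mathbb{E}[h_1(X)],\mathbb{E}[h_2(X)],\mathbb{E}[h_3(X)],\mathbb{E}[h_4(X)])^\top$ for the generator $T(x)=x^{-s}$ using the preliminary results: $\mathbb{E}[h_4(X)]=\mathbb{E}[X^{-s}]$ is given by \eqref{1-identity}; $\mathbb{E}[h_3(X)]=\mathbb{E}[\log(X^{-s})]=-s\,\mathbb{E}[\log X]$ is given by \eqref{expect-log}; $\mathbb{E}[h_2(X)]=\mathbb{E}[X^{-s}\log(X^{-s})]=-s\,\mathbb{E}[X^{-s}\log X]$ comes from Proposition~\ref{dist-exp-2} with $p=s$; and $\mathbb{E}[h_1(X)]=\mathbb{E}\!\left[\frac{X^{-s}\log(X^{-s})}{1+\delta_{ab}X^{-s}}\right]$ is exactly Proposition~\ref{dist-exp-3}. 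These are all closed forms in $\mu,\sigma$ (and $\delta_{ab}$, $\psi^{(0)}$ terms).

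Next I would substitute these expressions into $g_2$ as defined in \eqref{def-g2}. The key observation is that the derivation of \eqref{2-identity} in Section~\ref{prem_results} was precisely the chain of algebraic manipulations showing that $\mu$ equals $\frac{\delta_{ab}\mathbb{E}[h_1(X)]+1}{\sigma\mathbb{E}[h_2(X)]-\mathbb{E}[h_3(X)]}$; so once I know that the denominator-argument $g_1(\mathbb{E}(\bm X))$ equals $\sigma$, the identity $g_2(\mathbb{E}(\bm X))=\mu$ follows immediately from \eqref{2-identity}. Hence the real work is showing $g_1(\mathbb{E}(\bm X))=\sigma$. For this I would plug the four moments into the quadratic whose positive root is $g_1$ — equivalently, verify that $\sigma$ solves
\begin{align*}
\mathbb{E}[h_4(X)]\,\sigma^2
-\left(1-\delta_{ab}\mathbb{E}[h_4(X)]+\frac{\delta_{ab}\mathbb{E}[h_2(X)]}{\delta_{ab}\mathbb{E}[h_1(X)]+1}\right)\sigma
-\left(\delta_{ab}-\frac{\delta_{ab}\mathbb{E}[h_3(X)]}{\delta_{ab}\mathbb{E}[h_1(X)]+1}\right)=0,
\end{align*}
which is just \eqref{1-identity} and \eqref{2-identity} read backwards (that is how \eqref{sigma-function} was obtained in the first place). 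One also checks the sign so that $\sigma$ is the $+$ root, not the spurious one; since $\mathbb{E}[h_4(X)]=\mathbb{E}[X^{-s}]>0$ and $\sigma>0$, the correct branch is forced. It is worth treating the cases $a\neq b$ and $a=b$ separately at the end as sanity checks: when $\delta_{ab}=0$ the quadratic degenerates and $g_1$ reduces to $1/\mathbb{E}[h_4(X)]=\sigma$, matching \eqref{sigma-estimator-1-1}, and when $\delta_{ab}=1$ one recovers \eqref{sigma-estimator-1-0}.

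Once the two identities are in hand, the convergence statement is immediate: \eqref{id-1} already gives $\widehat{\sigma}\to g_1(\mathbb{E}(\bm X))=\sigma$ and $\widehat{\mu}\to g_2(\mathbb{E}(\bm X))=\mu$ almost surely, and the displayed CLT in \eqref{id-2}, with the target vector $(g_2(\mathbb{E}(\bm X)),g_1(\mathbb{E}(\bm X)))^\top$ now identified as $(\mu,\sigma)^\top$, is exactly the asserted limit $N_2(\bm 0,\bm A\bm\Sigma\bm A^\top)$. The only hypothesis to confirm for the delta method is differentiability of $g_1,g_2$ at $\mathbb{E}(\bm X)$ and finiteness of $\bm\Sigma$; the former holds because $\mathbb{E}[h_4(X)]=\mathbb{E}[X^{-s}]>0$ and $\delta_{ab}\mathbb{E}[h_1(X)]+1>0$ keep all denominators (and the radicand, being a square plus a term that vanishes appropriately) away from zero in a neighbourhood, and the latter follows, as the text notes, from computing the second moments of $h_1(X),\dots,h_4(X)$ by the same Gamma-moment formulas used above. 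I expect the main obstacle to be purely bookkeeping: carefully carrying the $\psi^{(0)}$ and $\log(\mu\sigma)$ terms through the substitution into the quadratic and watching them cancel, together with keeping the $\delta_{ab}$ bookkeeping consistent — but conceptually nothing beyond reversing the derivation that already produced \eqref{sigma-function} is needed.
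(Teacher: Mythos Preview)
Your proposal is correct and follows essentially the same route as the paper's own proof: the paper simply observes that $g_1(\mathbb{E}(\bm X))$ and $g_2(\mathbb{E}(\bm X))$ are by definition the right-hand sides of \eqref{sigma-function} and \eqref{2-identity}, which equal $\sigma$ and $\mu$ respectively, and then invokes \eqref{id-2}. Your version is in fact more careful than the paper's --- checking the correct branch of the quadratic, differentiability of $g_1,g_2$ at $\mathbb{E}(\bm X)$, and finiteness of $\bm\Sigma$ --- but the core argument is identical.
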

\begin{proof}
	The proof follows immediately from \eqref{2-identity} and \eqref{sigma-function}, because
	\begin{align*}
	g_2(\mathbb{E}(\bm X))
	=
	\dfrac{\displaystyle
		{\delta_{ab}}
		\mathbb{E}\left[h_1(X)\right]
		+
		1
	}
	{\displaystyle
		\sigma \mathbb{E}\left[h_2(X)\right]
		-
		\mathbb{E}[h_3(X)]
	}
	\stackrel{\eqref{2-identity}}{=}
	\mu
	\end{align*}
	and
		\begin{align*}
		g_1(\mathbb{E}(\bm X))
		&=
			\dfrac{
			1
			-
			\delta_{ab}\mathbb{E}\left[h_4(X)\right]
			+
			{\delta_{ab}\mathbb{E}\left[h_2(X)\right]\over 	
				\delta_{ab}
				\mathbb{E}\left[h_1(X)\right]
				+
				1}
			}
					{
				2\mathbb{E}\left[h_4(X)\right]
			}
			\\[0,2cm]
			&+
			\dfrac{
			\sqrt{\left\{	1
				-
				\delta_{ab}\mathbb{E}\left[h_4(X)\right]
				+
				{\delta_{ab}\mathbb{E}\left[h_2(X)\right]\over 	
					\delta_{ab}
					\mathbb{E}\left[h_1(X)\right]
					+
					1}\right\}^2
				+
				4\mathbb{E}\left[h_4(X)\right]
				\left\{\delta_{ab}-{\delta_{ab}\mathbb{E}\left[h_3(X)\right]\over
					\delta_{ab}
					\mathbb{E}\left[h_1(X)\right]
					+
					1 }\right\}
			}
		}
		{
			2\mathbb{E}\left[h_4(X)\right]
		}
		\stackrel{\eqref{sigma-function}}{=}
		\sigma.
	\end{align*}	

Thus completes the proof.
\end{proof}

\section{Simulation study}\label{Sec:simulation}

In this section, we carry out a Monte Carlo simulation study for evaluating the performance of the proposed estimators. Particularly, we evaluate a bias-reduced version of the proposed moment-type estimators, as they are biased \citep{RLR2016}. For illustrative purposes, we only present the results for the weighted inverse Lindley distribution. Then, by considering the parameters $\mu=\phi$, $\sigma=\lambda/\phi$ and generator $T(x)=x^{-1}$ of the weighted inverse Lindley distribution, given in Table \ref{table:1-0}, from \eqref{sigma-estimator-1-0} and \eqref{mu-estimator-0} the bootstrap biased-reduced moment-type estimators for $\lambda$ and $\phi$ are given by

\begin{eqnarray*}\label{biasred}
\widehat{\lambda}^{*} = 2\widehat{\lambda} - \frac{1}{B}\sum_{b=1}^{B}\widehat{\lambda}^{(b)},\\
\widehat{\phi}^{*} = 2\widehat{\phi} - \frac{1}{B}\sum_{b=1}^{B}\widehat{\phi}^{(b)},\\
\end{eqnarray*}
where $\widehat{\lambda}^{(b)}$ and $\widehat{\phi}^{(b)}$ are the $b$-th bootstrap replicates from the $b$-th bootstrap sample,
\begin{align*}
		\widehat{\lambda}
		=
		\dfrac{
			1
			-
			\overline{X}_4
			+
			{\overline{X}_2\over
				\overline{X}_1
				+
				1}
			+
			\sqrt{\left\{	1
				-
				\overline{X}_4
				+
				{\overline{X}_2\over
					\overline{X}_1
					+
					1}\right\}^2
				+
				4\overline{X}_4
				\left\{1-{\overline{X}_3\over
					\overline{X}_1
					+
					1 }\right\}
			}
		}
		{
			2\overline{X}_4
		}
		\,
		\widehat{\phi}
	\end{align*}
	and
	\begin{align*}
		\widehat{\phi}
		=
		\dfrac{\displaystyle
			\overline{X}_1
			+
			1
		}
		{\displaystyle
				\dfrac{
			1
			-
			\overline{X}_4
			+
			{\overline{X}_2\over
				\overline{X}_1
				+
				1}
			+
			\sqrt{\left\{	1
				-
				\overline{X}_4
				+
				{\overline{X}_2\over
					\overline{X}_1
					+
					1}\right\}^2
				+
				4\overline{X}_4
				\left\{1-{\overline{X}_3\over
					\overline{X}_1
					+
					1 }\right\}
			}
		}
		{
			2\overline{X}_4
		}\, \overline{X}_2
			-
			\overline{X}_3
		},
	\end{align*}
with
	\begin{align*}
	\overline{X}_1
	&=
	{1\over n}\sum_{i=1}^{n}{X_i^{-1}\log(X_i^{-1})\over 1+X_i^{-1}},
	\\[0,2cm]
	\overline{X}_2&={1\over n}\sum_{i=1}^{n}{X_i^{-1}\log(X_i^{-1})},
\\[0,2cm]
	\overline{X}_3&={1\over n}\sum_{i=1}^{n}\log(X_i^{-1}),
\\[0,2cm]
	\overline{X}_4&={1\over n}\sum_{i=1}^{n}X_i^{-1}.
	\end{align*}

For assessing the performance of the proposed bootstrap biased-reduced moment-type estimators, we calculated the relative bias (RB) and root mean square error (RMSE), given by
\begin{eqnarray*}
 \widehat{\textrm{RB}}(\widehat{\theta}^{*}) =  \left|\frac{\frac{1}{N} \sum_{i = 1}^{N} \widehat{\theta}^{*(i)} - \theta}{\theta}\right| ,  \quad
\widehat{\mathrm{RMSE}}(\widehat{\theta}^{*}) = {\sqrt{\frac{1}{N} \sum_{i = 1}^{N} (\widehat{\theta}^{*(i)} - \theta)^2}},
\end{eqnarray*}
where $\theta\in\{ \lambda,\phi\}$ and $\widehat{\theta}^{*(i)}\in\{\widehat{\lambda}^{*(i)},\widehat{\phi}^{*(i)}\}$ are the true parameter value and its $i$-th bootstrap bias-reduced estimate, and $N$ is the number of Monte Carlo replications.

The simulation scenario considers the following setting: sample size $n \in \{20,50,100,200,400,1000\}$, $\phi\in \{0.5,1,3,5,9\}$, and $\lambda=1$.  The number of Monte Carlo replications was $N=1,000$ and the number of bootstrap replications was $B=200$. The numerical evaluations were implemented using the \verb+R+ software; see \url{http://cran.r-project.org}.

Figures~\ref{fig_dagum_MC1} and \ref{fig_dagum_MC2} show the results of Monte Carlo simulation study to assess the performance of the proposed bootstrap biased-reduced moment-type (MOM) estimators.  For comparison purposes, we also considered the results of the classical maximum likelihood estimator (MLE) discussed by \cite{RLSL2018}. Figures~\ref{fig_dagum_MC1} and \ref{fig_dagum_MC2} show that, as expected, both biases and RMSEs of the estimators approach zero as the sample size increases. However, the bias is much lower for smaller samples for the proposed estimator. Moreover, the RMSE is similar for both estimators. Finally, the results do not seem to be affected by the parameter $\phi$.

\begin{figure}[htb!]
	\vspace{-0.25cm}
	\centering
	{\includegraphics[height=6.5cm,width=6.5cm]{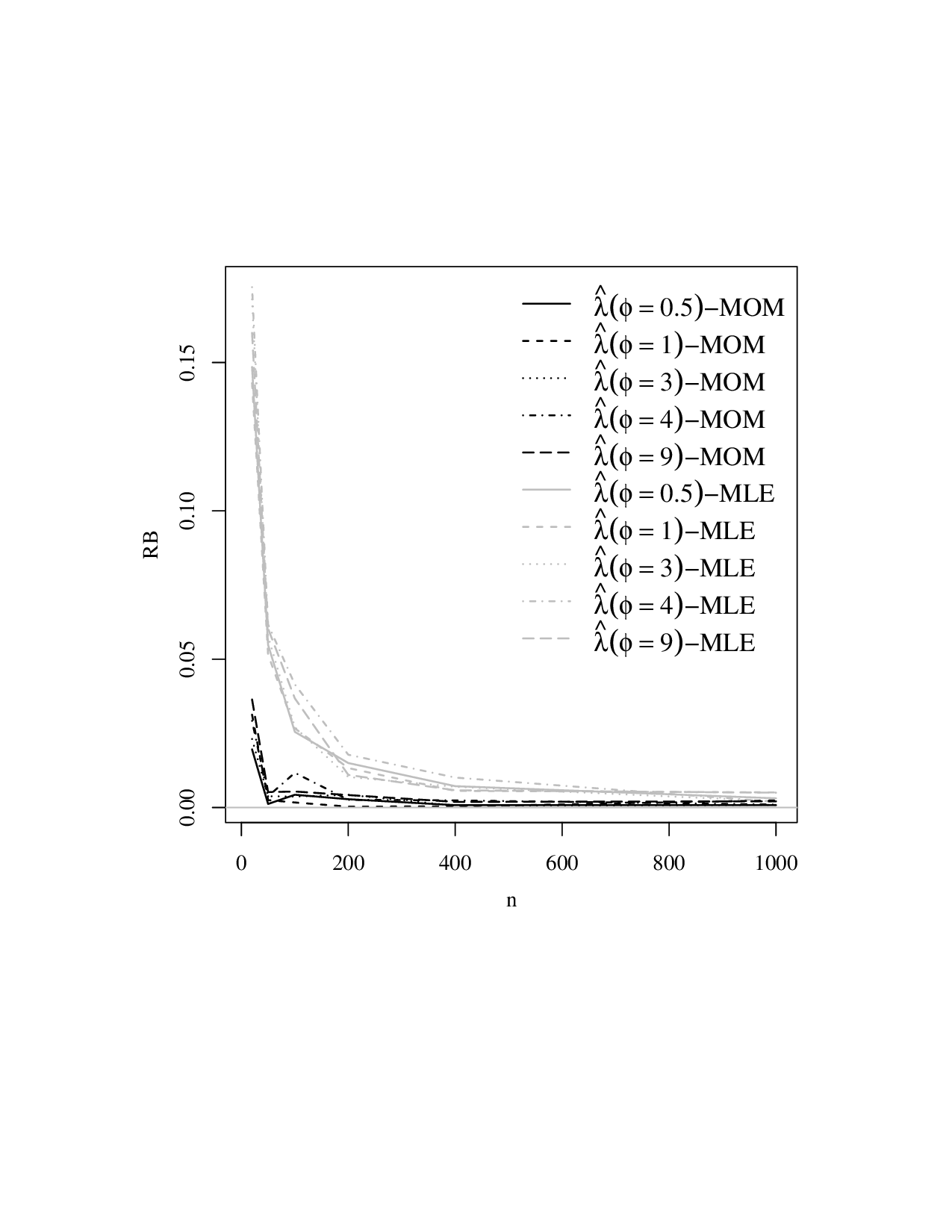}}\hspace{-0.25cm}
	{\includegraphics[height=6.5cm,width=6.5cm]{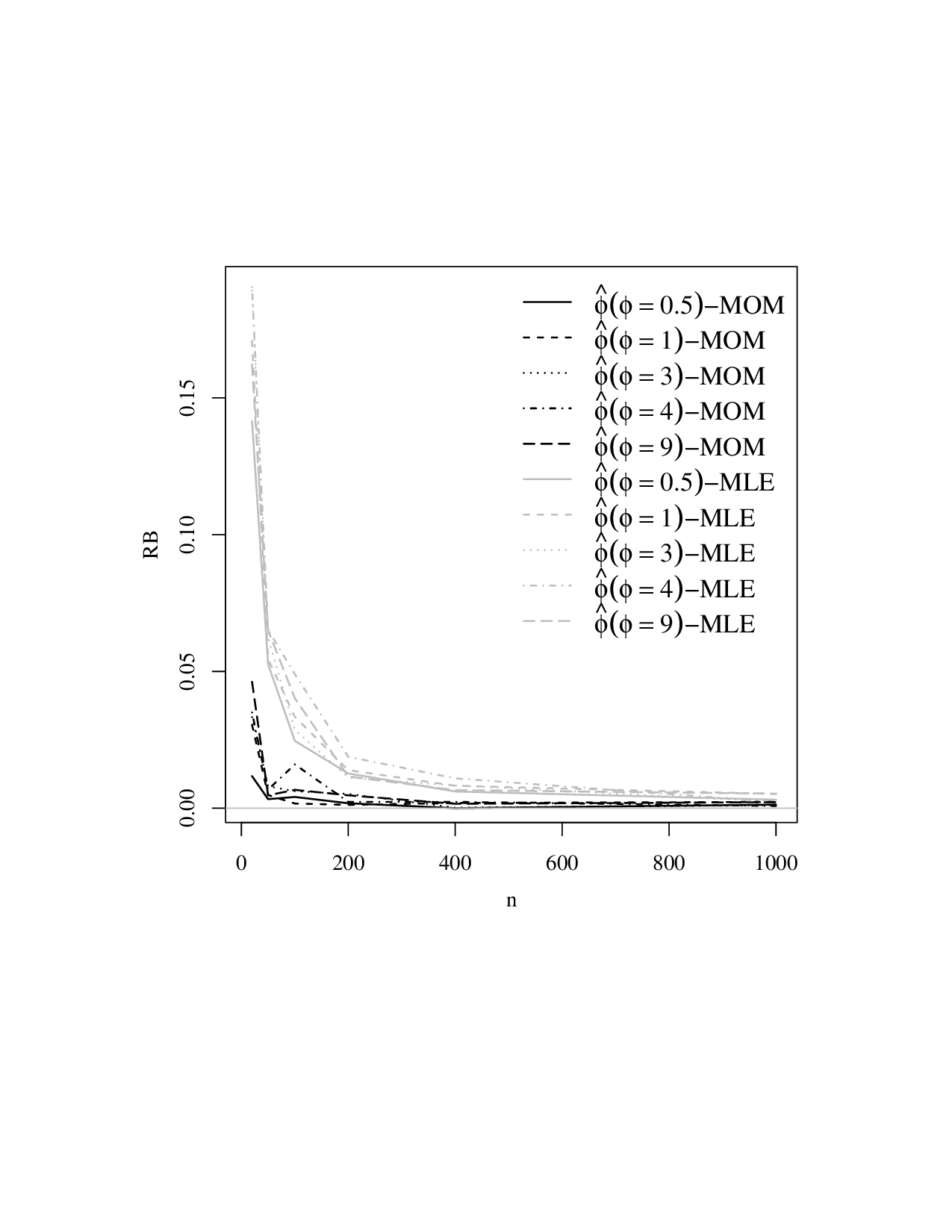}}\hspace{-0.25cm}
	\vspace{-0.2cm}
	\caption{Empirical RB of the bootstrap bootstrap biased-reduced moment-type and classical maximum likelihood estimators estimators for the weighted inverse Lindley distribution.}
	\label{fig_dagum_MC1}
\end{figure}

\begin{figure}[htb!]
	\vspace{-0.25cm}
	\centering
	{\includegraphics[height=6.5cm,width=6.5cm]{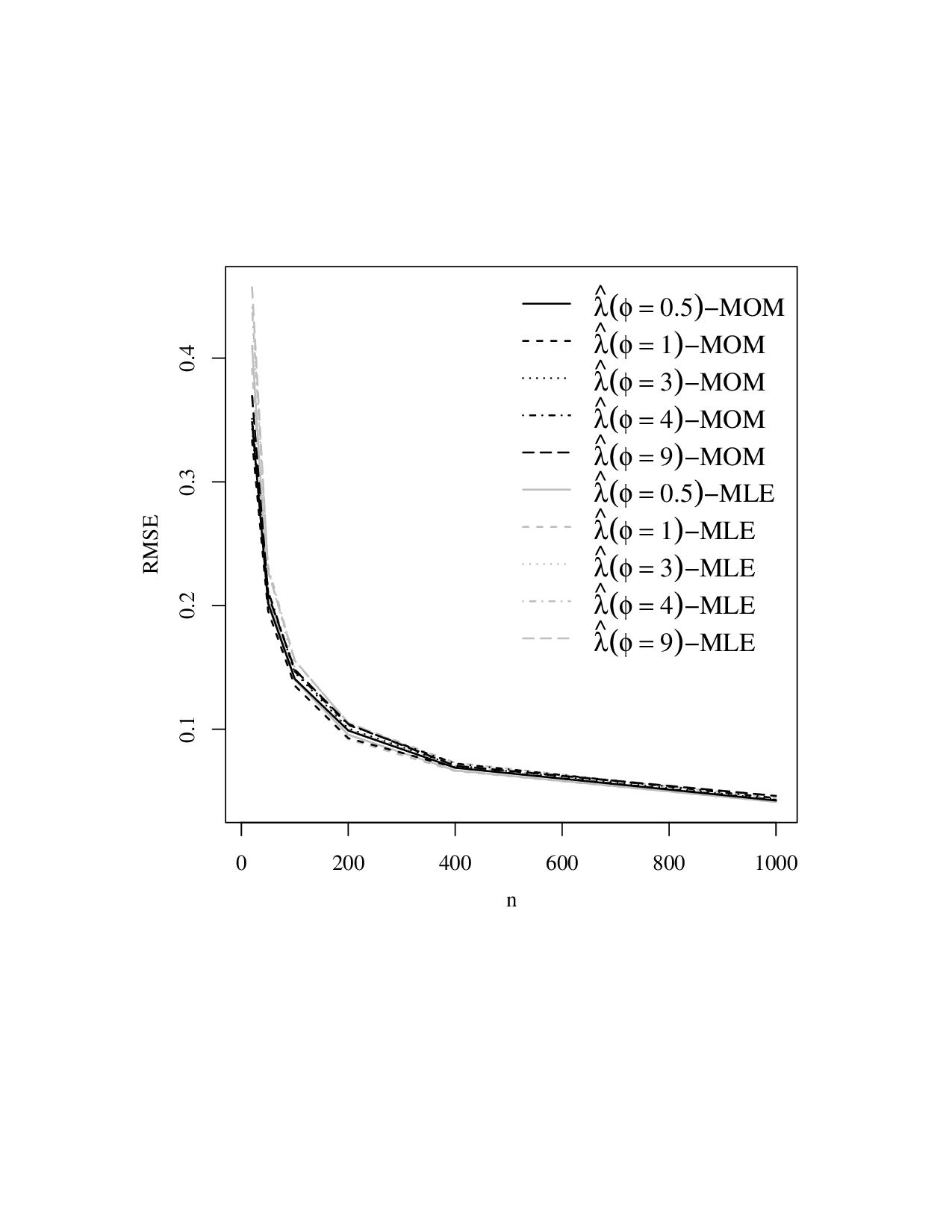}}\hspace{-0.25cm}
	{\includegraphics[height=6.5cm,width=6.5cm]{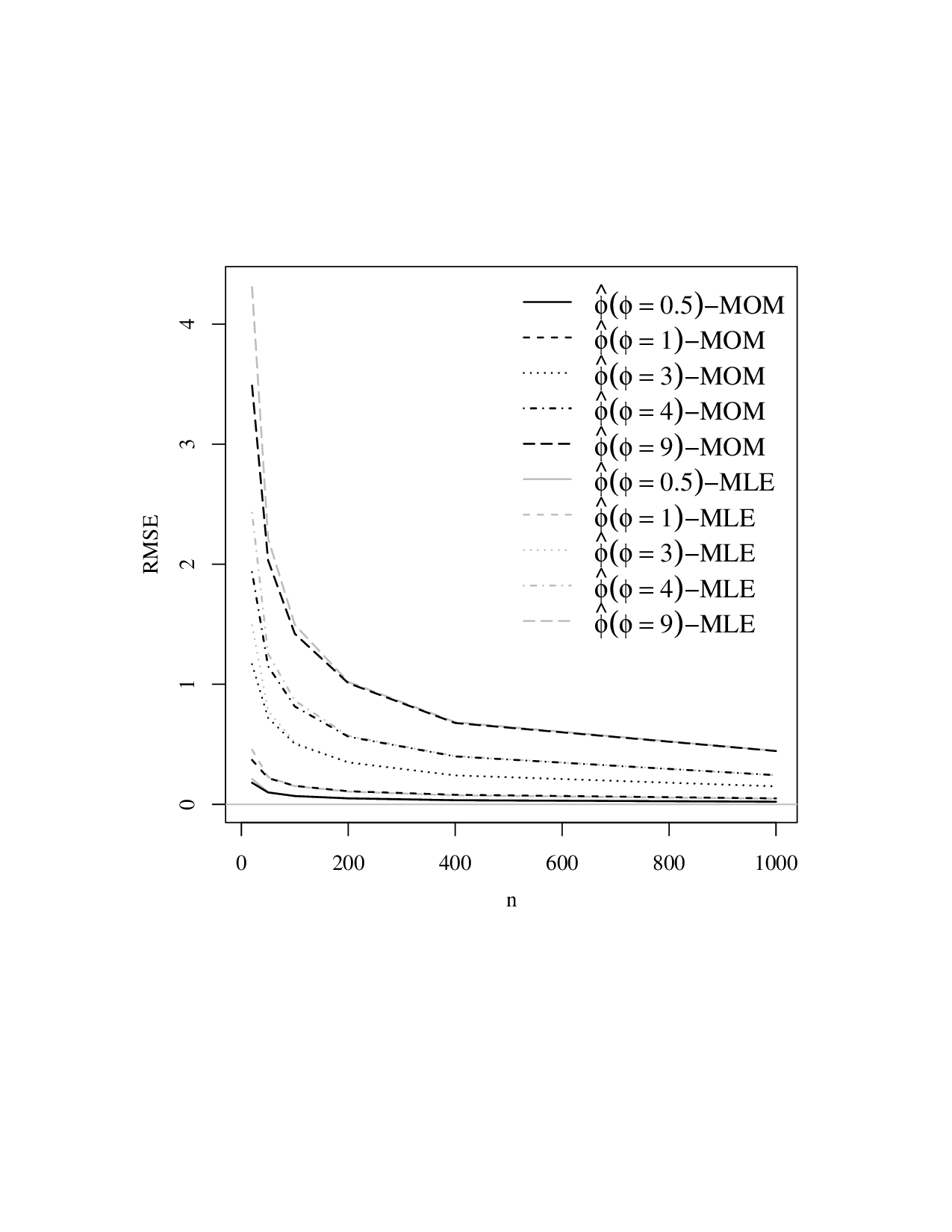}}\hspace{-0.25cm}

	%
	\vspace{-0.2cm}
	\caption{Empirical RMSE of the bootstrap biased-reduced moment-type and classical maximum likelihood estimators for the weighted inverse Lindley distribution.}
	\label{fig_dagum_MC2}
\end{figure}

\clearpage

	\paragraph*{Acknowledgements}
The research was supported in part by CNPq and CAPES grants from the Brazilian government.
	
	\paragraph*{Disclosure statement}
	There are no conflicts of interest to disclose.

\end{document}